\definecolor{sepia}         {cmyk}{0   , 0.83, 1   , 0.70}
\definecolor{royalblue}     {cmyk}{1   , 0.50, 0   , 0   }
\tikzset{node distance=2.5cm, auto}
\def\A{\mathcal{A}}
\def\C{\mathcal{C}}
\def\D{\mathcal{D}}
\def\F{\mathcal{F}}
\def\L{\mathcal{L}}
\def\N{\mathbb{N}}
\def\OutIn{\rightarrowtail}
\def\P{\mathcal{P}}
\def\PDS{\textbf{PDS}}
\def\MA{\textbf{MA}}
\def\TMA{\A^{\dagger}}
\def\T{\mathcal{T}}
\def\TAPIZFAP{\T(\F,\C)}
\def\TA{\mathbb{H}} 
\def\TS{\mathbb{TS}} 
\def\X{\mathcal{X}}   
\def\S{\textbf{$P_{API}$}}
\def\States{\textbf{States}}
\def\Reg{\textbf{R}}
\def\Z{\mathbb{Z}}
\def\EXP{\textbf{EXP}}
\def\pdt{\hookrightarrow}
\def\Conf#1#2{\left<#1,#2\right>}
\def\API{\textbf{API}}
\def\O{\mathcal{O}}
\def\JOIN{\mbox{\bf{insert\_subtree}}}
\def\subtree{\lhd}
\def\CopyFile{\emph{CopyFile}}
\def\GetModuleFileName{\emph{GetModuleFileName}}
\def\GetWindowsDirectory{\emph{GetWindowsDirectory}}
\def\ShellExecute{\emph{ShellExecute}}
\def\ExitProcess{\emph{ExitProcess}}
\def\MSCDT{{\bf Mal\SCDT}}
\def\SCDT{\textbf{SCDT}}
\def\HELTA{\textbf{HELTA}}
\def\LdPinch{LdPinch}
\def\Nmalware{1176}
\def\Mtest{983}
\def\MtestF{330}
\def\Mtrain{193}
\def\Nbenware{250}
\def\FPR{0\%}
\author{Hugo Daniel Macedo \inst{1} \and Tayssir Touili \inst{1}}
\institute{LIAFA, CNRS and Univ. Paris Diderot, France\\ \email{\{macedo,touili\}@liafa.univ-paris-diderot.fr}}
\title{Mining malware specifications through static reachability analysis}
\begin{document}

\maketitle

\begin{abstract} 

The number of malicious software (malware) is growing out of control. Syntactic
signature based detection cannot cope with such growth and manual construction
of malware signature databases needs to be replaced by computer learning based
approaches. Currently, a single modern signature capturing the semantics of a
malicious behavior can be used to replace an arbitrarily large number of
old-fashioned syntactical signatures. However teaching computers to learn
such behaviors is a challenge. Existing work relies on dynamic analysis to
extract malicious behaviors, but such technique does not guarantee the coverage
of all behaviors. To sidestep this limitation we show how to learn malware
signatures using \emph{static} reachability analysis. The idea is to model
binary programs using pushdown systems (that can be used to model the stack 
operations occurring during the binary code execution), use reachability analysis to extract
behaviors in the form of trees, and use subtrees that are common among the
trees extracted from a training set of malware files as signatures. To detect
malware we propose to use a tree automaton to compactly store malicious
behavior trees and check if any of the subtrees extracted from the file under
analysis is malicious.  Experimental data shows that our approach can be used
to learn signatures from a training set of malware files and use them to detect 
a test set of malware that is 5 times the size of the training set.

\end{abstract}

\section{Introduction} 

Malware (malicious software) is software developed to damage the system that
executes it, e.g.: virus, trojans, rootkits, etc.   A malware variant performs
the same damage as another known malware, but its code, its syntactical
representation, is different. Malware can be grouped into families, sets of
malware sharing a common trait.  Security reports acknowledge a steady
increase in the number of new malware.  For instance, in 2010 the number of
newly unique variants of malware was 286 million \cite{Fossi:11} and recent
numbers confirm the trend \cite{MacAfee:12}.  Such numbers challenge current
malware detection technology and because variants can be automatically
generated the problem tends to get worse.  Research confirms the unsuitability
of current malware detectors \cite{Fredrikson:10,Song:12b}. The problem is the
low-level of the techniques used. 

The basic detection technique is signature matching, it consists in the inspection
of the binary code and search for patterns in the form of binary sequences
\cite{Szor:05}.  
Such patterns, malware signatures in the jargon and syntactic signatures
throughout this paper, are manually introduced in a database by experts. As it
is possible to automatically generate an unbounded number of variants, such
databases would have to grow arbitrarily, not to mention it takes about two months to manually update
them \cite{Fredrikson:10}.

An alternative to signature detection is dynamic analysis, which runs malware
in a virtual machine. Therefore, it is possible to check the program behavior,
for instance to detect calls to system functions or changes in sensitive
files, but as the execution duration must be limited in time it is difficult to
trigger the malicious behaviors, since these may be hidden behind user
interaction or require delays.

To overcome the problems of the previous techniques, a precise notion of
malicious behavior was introduced. Such is the outcome of the recent use of model-checking techniques to perform virus detection
\cite{Bergeron:01,Christodorescu:03,Christodorescu:05,Holzer:07,Kinder:05,Kinder:10,Song:12c,Song:12b,Song:12a,Singh:03}.
Such techniques allow to check the behavior (not the syntax) of the program
without executing it.
 A malicious behavior is  a pattern written
as a logical formula that specifies at a semantic level how the syntactic
instructions in the binary executable perform damage during execution.  As the 
malicious behavior is the same in all the variants of a malware, such patterns
can be used as modern (semantic) signatures which can be efficiently stored.


The prime example of a
malicious behavior is self-replication \cite{Szor:05}.  A typical
\begin{wrapfigure}{r}{0.4\textwidth}	
\vspace{-15pt}
\centering
\begin{footnotesize}
\begin{tabular}{l}
$l_1$ : push m			\\
$l_2$ : mov ebx 0		\\
$l_3$ : push ebx		\\
$l_4$ : call \GetModuleFileName	\\
$l_5$ : push m			\\
$l_6$ : call \CopyFile 		\\
\end{tabular}
\end{footnotesize}
  \vspace{-8pt}
\caption{Malware assembly fragment.}
  \vspace{-15pt}
\label{fig:dis}
\end{wrapfigure}
 instance of such behavior is a program
that  copies its own binary representation into another file, as 
exemplified in the assembly fragment of Fig. \ref{fig:dis}. The attacker
program discovers and stores  its file path into a memory address $m$ by calling the
\GetModuleFileName\ function with $0$ as first parameter and  $m$ as
second parameter. 
Later such file name is used to infect another file by calling \CopyFile\ with
$m$ as first parameter. 
Such malicious behaviors can naturally be defined in terms of system functions
calls and data flow relationships.  

\emph{System functions} are the mediators
between programs and their environment (user data, network access,\dots), and
as those functions can be given a fixed semantics, and are defined in an
Application Programming Interface (\API), they can be used as a common
denominator between programs, i.e.\ if the syntactical representation of programs 
is different but both interact in the same way with the environment, the programs 
are semantically equivalent from an observer perspective.  
   

A \emph{data flow} expresses that a value outputted at a certain time instant of
program execution by a function is used as an input by another function at a
following instant.  For example when a parameter is outputted by a system call
and is used as an input of another. Such data flow relations allow us to
characterize combined behaviors purported by the related system calls.  For
instance, in the example of Fig. \ref{fig:dis} it is the data flow evidenced by
the variable $m$, defined at the invocation of \GetModuleFileName\ and used at
the invocation of \CopyFile\ that establishes the self-replication behavior.

The malicious behaviors can be described naturally by trees expressing data
flows among system calls made at runtime. Due to code branches during execution 
  it is possible to have
several flows departing from the same system call, thus a tree structure is
particularly suitable to represent malicious behaviors.  
Plus, as such behaviors are described
independently of the functionality of the code that makes the calls, system
call data flow based signatures are  more robust against code obfuscations.
Thus, a remaining challenge is to learn such trees from malware
binary executables. 
%
%
 
Recent work \cite{Babic:11,Christodorescu:07,Fredrikson:10} shows that we
can teach computers to learn malicious behavior specifications. 
Given a set of malware, the problem of extracting
malicious behavior signatures consists in the extraction of the behaviors
included in the set and use statistical machinery to choose the ones that
are more likely to appear. However the approaches rely on dynamic 
analysis of executables which do not fully cover all behaviors.  
%
%
%
To overcome these limitations, in this paper we show how to use static
reachability analysis to extract malicious behaviors, thus covering the whole
behaviors of a program at once and within a limited time. 

\paragraph{\textbf{Our approach.}}
We address such challenge in the
following way: given the set of known malware binary executables, we extract
its malicious behaviors in the form 
of edge labeled trees with two kinds of nodes.  One kind represents the knowledge that
a system
function is called, the other kind of nodes represents which values were passed as parameters in the call (because some data
flows between functions are only malicious when the calls were made with a
specific parameter e.g. the $0$ passed to \GetModuleFileName\ in the
self-replication behavior).  Tree labels describe either a relation among system
calls or the number of the parameter instantiated. For example, the malicious
behavior displayed in Fig.  \ref{fig:dis} can be displayed in the tree shown in
Fig.  \ref{fig:treeex}. The tree captures  the self-replication behavior.

\begin{wrapfigure}{r}{0.35\textwidth}
\vspace{-25pt} \centering \begin{tikzpicture}[level distance=35pt,sibling
distance=18pt]

\Tree [ .$\GetModuleFileName$ 
	 \edge node[auto=right]{1}; [.$0$ ] 
	  \edge node[auto=left]{$2\OutIn1$}; [.$\CopyFile$ 
							]
         ]
\end{tikzpicture}
  \vspace{-8pt}
\caption{Self-replication behavior}
\label{fig:treeex}
\vspace{-15pt} \end{wrapfigure}
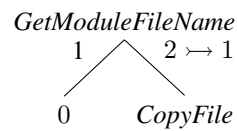   The
edge on the left means      
that the \GetModuleFileName\ function  is called with $0$ as first parameter (thus it will output the path to the malware file that called
it) while the edge on the right captures the data flow between the two system
calls i.e.\ the second parameter of a call to \GetModuleFileName\ is an output
and it is used as an input in the first parameter of a call to \CopyFile. Thus,
such tree describes the following behavior: \GetModuleFileName\ is called with
$0$ as first parameter and its second parameter will be used as input in the
first parameter of a subsequent call to \CopyFile.


The first step in the tree extraction process is to model the malware binaries,
which involves modeling (recursive) procedure calling and return, and parameter
passing that are implemented using a stack.  
For this aim, we model each of the files using a pushdown system (\PDS), an
automaton that mimics the binary code execution as a state transition system.
With this model one is able to rigorously define the behavior of the program
and use the decidable and efficient state reachability analysis of \PDS s to
calculate all the states and the contents of the stack that can occur during
execution. Therefore, if malware performs a system call with certain parameters, the
reachability analysis will reveal it even if the call is obfuscated, e.g.: jump to function address. The same happens if the call is
made using indirect addressing because the analysis will reveal that during execution the entry
point of the system call is reached.  Our approach also works against bitwise
manipulation of parameters, because we assume the system functions are not
changed by the attacker, thus when the executions reaches the entry point
of the system function, parameters must not be obfuscated, for instance in the example above even if 
the value of $m$ is obfuscated, at the entry point of the call 
the value must be $m$ to purport the self-replication behavior.


From the reachability analysis of each \PDS, we obtain a multi-automaton (\MA), a finite
automaton encoding the possibly infinite reachable configurations (states and
stack contents)\cite{Bouajjani:97,Esparza:00}. As the number of system
functions is finite, we cut the finite automaton to represent only the states
corresponding to system function entry points and stacks limited to the finite
number of parameters passed to the function. 

We analyze all data flows using the \MA s to build trees, written as system
call dependency trees (\SCDT s), representing such flows. The extracted trees
correspond to a superset of the data flows present in the malware because the
\PDS\ model is an overapproximation of the behaviors in the binary program.
This means, that when a data flow is found using our approach, there exists an
execution path in the model evidencing such data flow, but such execution path
may not be possible in the binary program due to approximation errors. 

From the trees (\SCDT s) extracted from the set of known malware binary
executables we use a data-mining algorithm to compute the most frequent
subtrees.  We assume such correspond to malicious behaviors and we will term
them malicious system call dependency trees (\MSCDT s). The usage of such
data-mining algorithm allows to compute behaviors, which we use as signatures
that are general and implementation details independent, therefore robust.

To store and recognize \MSCDT s we infer an automaton, termed \HELTA ,
recognizing trees containing \MSCDT s as subtrees. This allows to efficiently
store the malware signatures and recognize behaviors if they are hidden inside
another behavior.  The overview of the learning process from the malware files
to the database of semantic signatures is depicted in Figure \ref{fig:arch}. 

\tikzstyle{decision} = [starburst, draw, fill=red!20,text centered]
\tikzstyle{blocks} = [rectangle, draw, fill=blue!20, 
    text width=6em, text centered, rounded corners, minimum height=2em,double copy shadow={shadow 
xshift=0.1cm, shadow yshift=0.1cm}]
\tikzstyle{block} = [rectangle, draw, fill=blue!20, 
    text width=6em, text centered, rounded corners, minimum height=2em]
\tikzstyle{line} = [draw, -latex']
\tikzstyle{lines} = [draw,double copy shadow={shadow 
xshift=0.1cm, shadow yshift=0.1cm}]

\tikzstyle{arr} = [single arrow, draw,fill=blue!20]
\tikzstyle{arrs} = [single arrow, draw,fill=blue!20,double copy shadow={shadow 
xshift=0.1cm, shadow yshift=0.1cm}]

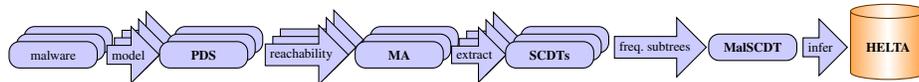
\begin{figure}
\centering
\vspace{-15pt} 

\begin{tiny}
\begin{tikzpicture}[node distance = 1.0cm,auto]
    \node [blocks] (init) {malware};
    \node [arrs,right of=init] (i2id) {model};
    \node [blocks, right of=i2id] (identify) {\PDS};
    \node [arrs,right of=identify,node distance=1.3cm] (id2ev) {reachability};
    \node [blocks, right of=id2ev,node distance=1.3cm] (evaluate) {\MA};
    \node [arrs,right of=evaluate] (ev2up) {extract};
    \node [blocks, right of=ev2up] (update) {\bf \SCDT s};
    \node [arr,right of=update,node distance=1.3cm,xshift=0.1cm,yshift=0.1cm] (u2st) {freq. subtrees};
    \node [block, right of=u2st,node distance=1.3cm] (stop) {\MSCDT};
    \node [arr,right of=stop,node distance=0.9cm] (st2a1) {infer};
	
    \node[draw, cylinder, alias=cyl, shape border rotate=90, aspect=1.6, %
    minimum height=1cm, minimum width=1cm, outer sep=-0.5\pgflinewidth, %
    color=orange!40!black, left color=orange!70, right color=orange!80, middle
    color=white,right of=st2a1,node distance=0.9cm] (a1) {};%
     \node[right of=st2a1,node distance=0.9cm] {\HELTA};%

\end{tikzpicture}
\end{tiny}

\vspace{-10pt} 
\caption{Learning malicious behaviors}
\label{fig:arch}
\vspace{-20pt} 
\end{figure}

To evaluate the efficiency of the computed malicious behaviors, we show they
can be applied to efficiently detect malware. To perform malware detection on a
binary executable, we extract trees using the same procedure used in the learning process
(described above), but applied to a single file.
%
We check whether the automaton storing malicious behaviors accepts any subtree
of the extracted trees (\SCDT s). If that is the case the executable contains a malicious
behavior and is classified as malware. The depiction of such process is shown in Figure \ref{fig:det}.

\begin{figure}
\centering
\vspace{-15pt} 
\begin{tiny}
\begin{tikzpicture}[node distance = 1.03cm,auto]
    \node [block] (init) {binary};
    \node [arr,right of=init] (i2id) {model};
    \node [block, right of=i2id] (identify) {\PDS};
    \node [arr,right of=identify,node distance=1.3cm] (id2ev) {reachability};
    \node [block, right of=id2ev,node distance=1.3cm] (evaluate) {\MA};
    \node [arr,right of=evaluate] (ev2up) {extract};
    \node [block, right of=ev2up] (update) {\bf \SCDT s};
    \node [arr,right of=update] (u2st) {subtrees};
    \node [decision, right of=u2st] (stop) {yes/no?};
    \node [arr,right of=stop,shape border rotate=180,node distance=1.3cm] (st2a1) {recognize};
	
    \node[draw, cylinder, alias=cyl, shape border rotate=90, aspect=1.6, %
    minimum height=1cm, minimum width=1cm, outer sep=-0.5\pgflinewidth, %
    color=orange!40!black, left color=orange!70, right color=orange!80, middle
    color=white,right of=st2a1,node distance=1cm] (a1) {};%
     \node[right of=st2a1,node distance=1cm] {\HELTA};%

\end{tikzpicture}
\end{tiny}
\vspace{-15pt} 
\caption{Malware detection}
\label{fig:det}
\vspace{-15pt} 
\end{figure}
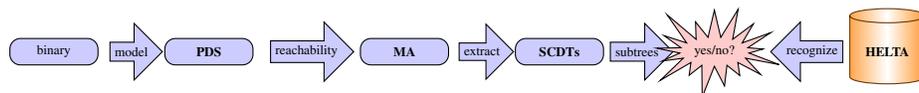

We implemented a tool that extracts the
behaviors and  selects the malicious candidates using  an algorithm for the
frequent subgraph problem\footnote{A tree is a special case of a graph}.  With
such tool we were able to infer some signatures not inferred using previous
approaches \cite{Babic:11,Christodorescu:07,Fredrikson:10} because our
signatures track calls to functions of the Win32 API instead of calls
to the Native API. It is a fact that it is always possible to use the previous
approaches to find Native API level signatures equivalent to the ones we infer,
therefore we do not claim our tool can express more behaviors, instead  we
claim that our approach is complementary to such works. It allows to express
behaviors at different API levels and to extract more abstract/readable (Win32
API level) signatures.
 
We obtained promising results, and we were able to detect \Mtest\ malware files
using the malicious trees inferred from \Mtrain\ malware files, with a \FPR\
false positive rate (thus showing our approach learns malicious behaviors that
do not appear in benign programs). This number of detected malware is larger
than the 16 files reported in \cite{Christodorescu:07} and in line with the 912
files detected in \cite{Fredrikson:10}. Our false positive detection rate is
better (5\% reported in \cite{Babic:11}).


\paragraph{Outline.} In Section \ref{sec:inf} we show how to model binary executables as \PDS s. 
Malware signatures are defined as labeled trees in Section \ref{sec:malspec}.
We present an algorithm  to infer malware specifications in Section \ref{sec:imp}, and 
we show how to use tree automata to perform malware detection in Section \ref{sec:det}. 
Experimental data shows our approach can be used to detect malware as detailed in Section \ref{sec:res}. The related work 
is summarized in Section \ref{sec:rel}
and in Section \ref{sec:conc} we present conclusions and future work.

\section{Binary code modeling}
\label{sec:inf}

Malware detection is performed directly in the executable encoding of the
software (binary code containing machine instructions and data). By modeling the
operational semantics of binary code, we are able to analyze it without relying
on execution. This section introduces the modeling framework  and how we 
 model executable files.


\subsection{Pushdown systems}

A pushdown system (\PDS) is a triple $\P=(P,\Gamma,\Delta)$ where $P$ is a finite
set of control points, $\Gamma$ is a finite alphabet of stack symbols, and
$\Delta\ \subseteq (P \times \Gamma) \times (P\times \Gamma^*)$ is a finite
set of transition rules.
A configuration $\Conf p \omega$ of $\P$ is an element of $P \times \Gamma^{*}$.
We write $\Conf{p}{\gamma} \pdt \Conf{q}{\omega}$ instead of $((p, \gamma), (q,
\omega)) \in \Delta$. The immediate successor relation $\leadsto_{\P} \subseteq (P \times
\Gamma^*) \times (P \times \Gamma^*)$ is defined as follows: if $\Conf{p}{\gamma}
\pdt \Conf{q}{\omega}$, then $\Conf{p}{\gamma \omega'} \leadsto_{\P} \Conf{q}{\omega \omega'}$ for
every $\omega' \in \Gamma^*$. 
%
%
%
%
%
The reachability relation
$\Rightarrow$ is defined as the reflexive and transitive closure of the
immediate successor relation.
%




Given a set of configurations $C$, $post(C)$ is defined as the set of
immediate successors of the elements in $C$. The reflexive and transitive
closure of $post$ is denoted as $post^*(C) = \{c' \in P\times\Gamma^* \mid
\exists c \in C, c \Rightarrow c' \}$ . Analogously  $pre(C)$ is defined as the
set of immediate predecessors of elements in $C$. Its reflexive and transitive
closure is denoted as $pre^*(C) = \{c \in P\times\Gamma^* \mid \exists c' \in
C, c \Rightarrow c' \}$.

\medskip 


Given a
pushdown system $\P=(P,\Gamma,\Delta)$, a $\P$-multi-automaton, $\P-MA$ or \MA\
when $\P$ is clear from context, is a tuple $\A=(\Gamma,Q,\delta,P,F)$, where $Q$ is a
finite set of states, $\delta \subseteq Q \times \Gamma \times Q$ is a
transition relation, $P \subseteq Q$ is the set of initial states corresponding
to the control points of $\P$, and $F\subseteq Q$ is a set of final states.
 
The transition relation for \MA\ is the smallest relation $\to \subseteq Q \times
\Gamma^* \times Q$ satisfying:
\begin{itemize}
\item $q \xrightarrow{\gamma} q'$ if $(q,\gamma,q') \in \delta$
\item $q \xrightarrow{\omega\gamma} q'$ if $q \xrightarrow{\omega}q''$ and $q''\xrightarrow{\gamma} q'$
\end{itemize}

$\A$ accepts (recognizes) a configuration $\Conf{p}{w}$  if $p \xrightarrow{w} q$ for some
$q\in F$. The set of configurations recognized by a \MA\  $\A$ is called regular and is designated by
$Conf(\A)$. The $post^*$ and $pre^*$ of regular configurations can be efficiently computed:

\begin{theorem}\cite{Bouajjani:97,Esparza:00} For a pushdown system $\P=(P,\Gamma,\Delta)$
and \MA\ $\A$, there exist \MA s $\A_{post^*}$ and $\A_{pre^*}$ recognizing
$post^*(Conf(\A))$ and $pre^*(Conf(\A))$  respectively. These can be constructed in polynomial time and space.   
\end{theorem}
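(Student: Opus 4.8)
The plan is to prove this by the classical \emph{saturation} construction, treating $pre^*$ and $post^*$ separately.

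For $pre^*$, I would keep the state set $Q$ of $\A$ unchanged and only add transitions, starting from $\delta$ and repeatedly applying the following rule until a fixpoint is reached: whenever $\Conf{p}{\gamma}\pdt\Conf{q}{\omega}$ is a rule of $\P$ and the current automaton admits a path $q\xrightarrow{\omega}q'$, add the transition $(p,\gamma,q')$. Since no new states are ever created, the number of candidate transitions is bounded by $|Q|^2\,|\Gamma|$, so the fixpoint is reached after polynomially many additions, each requiring only a polynomial-time path search in the current automaton; this yields the claimed polynomial time and space bound. Correctness then splits into two inclusions. Soundness (every configuration accepted by $\A_{pre^*}$ lies in $pre^*(Conf(\A))$) is proved by induction on the number of saturation steps used to create the transitions along an accepting run, replaying each added transition as the corresponding sequence of \PDS\ steps. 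Completeness (if $c\Rightarrow c'$ with $c'\in Conf(\A)$ then $c$ is accepted) is proved by induction on the length of the derivation $c\Rightarrow c'$, using the saturation rule precisely to prepend the first step of the derivation to an accepting run of the tail.

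For $post^*$, I would first put $\P$ into a normal form in which every rule $\Conf{p}{\gamma}\pdt\Conf{q}{\omega}$ has $|\omega|\le 2$, splitting longer right-hand sides by introducing fresh control points (a linear blow-up). Then I would build $\A_{post^*}$ from $\A$ by adding one fresh state $q_{p',\gamma'}$ for every control point $p'$ and stack symbol $\gamma'$ that occur as the target control point and first pushed symbol of some push rule, and saturating with three rules mirroring the three shapes of normalized rules: for a pop rule $\Conf{p}{\gamma}\pdt\Conf{q}{\varepsilon}$ together with a path $p\xrightarrow{\gamma}q'$, add an $\varepsilon$-move $q\to q'$; for a switch rule $\Conf{p}{\gamma}\pdt\Conf{q}{\gamma_1}$ with such a path, add $(q,\gamma_1,q')$; for a push rule $\Conf{p}{\gamma}\pdt\Conf{q}{\gamma_1\gamma_2}$ with such a path, add $(q,\gamma_1,q_{q,\gamma_1})$ and $(q_{q,\gamma_1},\gamma_2,q')$. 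Because the fresh states are indexed by a fixed finite set, the automaton stays polynomial, termination and the complexity bound follow exactly as before, and the $\varepsilon$-moves can be eliminated at the end by a standard closure. Correctness is again two inclusions, proved by induction on the number of saturation steps (soundness) and on the length of the $post^*$ derivation (completeness).

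The main obstacle is the completeness direction for $post^*$: one must show that \emph{every} reachable configuration is captured, which forces a careful invariant describing what each fresh intermediate state $q_{p',\gamma'}$ is allowed to read — intuitively, the portion of the stack lying strictly below $\gamma$ at the moment the push that created $\gamma_1\gamma_2$ occurred. Formulating this invariant and threading it through the induction, including its interaction with the $\varepsilon$-moves generated by pop rules, is the delicate part; the $pre^*$ case and the soundness directions are comparatively routine bookkeeping. Since the statement is quoted verbatim from \cite{Bouajjani:97,Esparza:00}, I would cite those works for the full argument rather than reproduce all the details here.
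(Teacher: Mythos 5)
The paper gives no proof of this theorem at all --- it is quoted directly from \cite{Bouajjani:97,Esparza:00} --- and your sketch is precisely the standard saturation construction from those references (fixed state set with transition saturation for $pre^*$; rule normalization, fresh states per push rule, and the three saturation cases for $post^*$), with the correct complexity accounting and the right identification of where the inductive invariants are needed. Your proposal is correct and matches the approach the paper defers to.
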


\subsection{Modeling binary programs with \PDS s}



\def\Prog{\mathbb{P}}

We use the approach detailed in \cite[Section 2]{Song:12b} to model each executable program $\Prog$. The approach relies on the assumption that there
exists an oracle $\mathcal{O}$ computing  a \PDS\
$\P=(P,\Gamma,\Delta)$ from the binary program, where $P$ corresponds to the
control points of the program, $\Gamma$ corresponds to the approximate set of
values pushed to the stack, and $\Delta$ models the different
instructions of the program.  The obtained \PDS\ mimics the runs of program $\Prog$. 

In addition to the approach of \cite{Song:12b}, let  $\API$ be the set of all Application Programming Interface function names
available in the program.  We assume the oracle $\mathcal{O}$ approximates
the set $\S \subseteq P$ of control points of a program that correspond to
instruction addresses that at program runtime are translated (dynamically linked) by the operating system into
system function entry points, the number of parameters of such functions and the type of each
parameter.  We consider a simple type system:
$
\tau ::= \mbox{\emph{in}} \mid \mbox{\emph{out}}$ (\emph{in} for input
parameter, and \emph{out} for output) containing the atomic value \emph{out}
used to denote a parameter that is modified after function execution and
\emph{in} to denote the parameter is not changed by the function.

We assume, $\mathcal{O}$  computes a function 
$\varrho_{\lambda} : \S \to \API$ that identifies program control points 
corresponding to system calls with an unique function name, a function $\varrho_\tau
: \S \times \N \to 2^\tau$ such that $\varrho_\tau(p,n)$ is the set\footnote{The
API defines parameters that are both input and
output.} of possible types of the $n$-th parameter of the
system call that has $p$ as entry point, and a function $\varrho_{ar} : \S \to \N$
defining the number of parameters for each system call in $\S$ 
%
For example, if we consider the program of Fig. \ref{fig:dis}, we obtain  $\S=\{l_g,l_c\}$ since these two points correspond to system call entry points, $\varrho_{\lambda}(l_g)=GetModuleFileName$ since $l_g$ corresponds to the entry point of the function \GetModuleFileName.
 $\varrho_{ar}(l_g) = 3$ since \GetModuleFileName\ has three parameters, and
$\varrho_\tau(l_g,2) = \{ out \}$ since the second parameter of the \GetModuleFileName\ function is 
defined as an output, and analogously $\varrho_\tau(l_g,1) =
\varrho_\tau(l_g,3)= \{in\}$, since these correspond to input parameters.


\section{Malicious behavior specifications}
\label{sec:malspec}

As already mentioned, malicious behaviors, data flow relationships between
system function calls, will be expressed as trees where nodes represent system
functions or parameter values and edges specify the data flow or the number of the parameter
to which the value was passed. We will now formally introduce
the notion of edge labeled trees.

\subsection{Edge labeled trees}

An unranked alphabet is a finite set $\F$ of symbols. 
Given an unranked alphabet $\F$, let a set of colors $\C$ be an alphabet of unary symbols
and disjoint from $\F$, and $\X$ be a set of variables disjoint from $\F$.  The set $\T(\F,\C,\X)$ of colored terms over the unranked
alphabet $\F$, colors $\C$ and variables $\X$  it is the smallest set of terms such that:

\begin{itemize}
\item $\F \subseteq \T(\F,\C,\X)$,
\item $\X  \subseteq \T(\F,\C,\X)$, and
\item $f(c_1(t_1),\dots,c_n(t_n)) \in \T(\F,\C,\X)$, for $n\geq 1$, $c_i \in \C$, $t_i \in \T(\F,\C,\X)$.
\end{itemize}  

\begin{wrapfigure}{r}{0.2\textwidth}
\vspace{-20pt}
\centering
\begin{tikzpicture}[level distance=30pt,sibling distance=18pt]

\Tree [ .$f$ 
	 \edge node[auto=right]{$c_1$}; [.$a$ ] 
	 \edge node[auto=left] {$c_2$}; [.$b$ ] 
      ]
\end{tikzpicture}
\vspace{-5pt}
\caption{Example}
\label{fig:edg}
\vspace{-20pt}
\end{wrapfigure} 
If $\X = \emptyset$ then $\T(\F,\C,\X)$ is written as $\T(\F,\C)$, and its elements are designated as ground terms.
Each element of the set of  terms  can be represented by an edge labeled tree.
For example, let $\F=\{f\}$, $\C=\{c_1,c_2\}$, and $\X=\emptyset$. The colored tree
$f(c_1(a),c_2(b)) \in \T(\F,\C)$ can be represented by the edge labeled tree of Fig.
\ref{fig:edg}.

Let $\X_n$ be a set of $n$ variables. A term $E \in \T(\F,\C,\X_n)$ is called an
environment and the expression $E[t_1,\dots,t_n]$ for $t_1,\dots,t_n \in \T(\F,\C)$
denotes the term in $\T(\F,\C)$ obtained from $E$ by replacing the variable $x_i$
by $t_i$ for each $1 \leq i \leq n$. 

A subtree $t'$ of a  tree $t$ in $\T(\L,\C)$, written as $t' \subtree t$,  is a
term such that there exists an environment $E$ in $\T(\L,\C,\{x\})$ where
$x$ appears only once and $t = E[t']$.

The tree $f(c_1(a),c_2(b))$ represents the same behavior as tree
$f(c_2(b),c_1(a))$. Thus, to  efficiently compare edge labeled trees, and to
avoid missing malicious behaviors due to tree representation, we define a
canonical representation of edge labeled trees. We assume that $\F$ and $\C$
are totally ordered.

A term is in canonical form if it is a constant (leaf) or if it is a function
(tree node) where each argument is in canonical form and arguments are sorted
without repetitions by term order.

Let $c \in \C$ and $t\in \F(\C,\T)$ such that $\F, \C,$ and $\T$ are respectively ordered by $<^\F, <^\C,$ and $<^\T$, and $t$ is  in canonical form. We assume a subtree insertion operation $(\JOIN)$ 
where $\JOIN(c(t),t')$ adds $c(t)$ as a child to the root of $t'$ in the correct place to maintain a canonical representation of the tree, overwriting if the subtree $c(t)$ already exists.

\subsection{System call dependency trees}

We will represent malware behaviors as trees encoding data flow relationships
between system function calls. 
Tree nodes represent either system functions or
parameter values. Edge colors label the characteristics of the data flow
between functions, e.g. $2 \OutIn 1$ labeling an edge from function $f$ and
$f'$ means that at some point $f$ is called with some value $v$ as second
parameter, which is of type $out$, and afterwards $f'$ is called with $v$ as
first parameter, which in turn is of type $in$. Moreover, when an edge connects
a node labeled with function $f$ and a child node with some value $v$, meaning
the function was called with parameter $v$, it will be labeled with  the number
of the parameter, thus to represent a call was made with $0$ as first parameter
to function $f$, we add $1$ as a label of the edge from node $f$ to node  $0$.

\begin{definition}
 Formally,
let $\F$ be the set of all system call function names (the union of all possibly  $\API$ function names returned by the oracle of Section \ref{sec:inf}) and values passed as function parameters (a subset of the union of all $\Gamma$ sets calculated by the oracle).
In addition, let $\C$ be a set of colors containing all the possible parameter numbers and data flows, i.e.: 
$
\C=\{1,\dots,\max_{f \in \API}(\varrho_{ar}(f))\} \cup \{ x \OutIn y \mid x,y \in \{1,\dots,\max_{f\in \API}(\varrho_{ar}(f))\}\}$
A System Call Dependency Tree, written as \SCDT, is defined as a ground term of the set $\T(\F,\C)$. 
\end{definition}

\noindent{\textbf{Example.}} Let 
$\F=\{0,\GetModuleFileName,\CopyFile\}$ and $\C=\{1,2\OutIn
1\}$, the behavior of Fig. \ref{fig:treeex} can be described by 
$t=\GetModuleFileName\big(1(0),2\OutIn 1(\CopyFile)\big)$.

\section{Mining malware specifications}
\label{sec:imp}
\def\PPDS{\mathcal{P}}

In this section we show how to compute the \SCDT s corresponding to malware
behaviors that we will use as malware specifications.  Given a finite set of
programs $\Prog_1,\dots, \Prog_q$ known to be malicious in advance we compute
\PDS s $\PPDS_1,\dots,\PPDS_q$ that model these malicious programs.  Then, for
each \PDS\ $\PPDS_i$ we compute a set of trees $\TS_i$ that contains the data
flows represented as \SCDT s for the program $\Prog_i$.  From the computed set of
trees for each program, $\TS_1,\dots, \TS_q$, we calculate the common subtrees,
the ones that are most probable to appear in malware, that we use as malware
specifications.

To compute the sets of trees $\TS_i$ we proceed as follows: For each program
$\Prog_i$ modeled as a \PDS\  $\PPDS_i$  we compute the finite automaton
encoding the set of reachable configurations from the initial state using the
reachability analysis algorithm from \cite{Esparza:00}. As there may be an
infinite number of configurations and we are only interested in the
configurations whose control points correspond to a system function entry
point with some finite number of elements in the stack (only the parameters of
the function under consideration are important), we build another automaton
recognizing such finite set of configurations. For each of such configurations,
understood as possible data flow origins,  we repeat the process to calculate
the reachable configurations, understood as possible data flow destinations.
Then, if a data flow between configurations is found, i.e.\ the value passed as a
parameter to an origin configuration has type $out$ and the same value passed as a
parameter of type $in$ to a destination configuration, we build a \SCDT\ with
the origin function as root node and an edge to a node corresponding to the
 destination function.

To calculate the common subtrees we use the algorithm \cite{Yan:02} 
computing frequent subgraph, to compute frequent subtrees.


\subsection{System call targeted reachability analysis}
\label{sec:comp}

To compute the data flows for a malware pushdown system  model
$\P=(P,\Gamma,\Delta)$, we first calculate the reachability of $\P$ using the
 algorithms presented in \cite{Esparza:00}.  From $\P$ we
build the (\MA) automaton $\A$ that recognizes the $post^{*}(\Conf{p_i}{\epsilon})$,
i.e.\ the  set of reachable configurations from the initial configuration
$\Conf{p_i}{\epsilon}$, where $p_i$ is a designated initial control point and
$\epsilon$ denotes the empty stack.
\paragraph{\textbf{\MA\ Trimming.}} To compute data flows between system call
related control points  $p_o,p_d \in \S$ with parameter numbers
$\varrho_{ar}(p_o)=m$ and $\varrho_{ar}(p_d)=n$ we need to consider only the
top $m+1$ and $n+1$ elements of the stack reached at control points $p_o$ and
$p_d$ because, in assembly, parameters are passed to functions through the
stack. Before invoking a function the parameters are pushed in reverse order
into the stack, and after the return address is pushed. Thus, if a function
receives $m$ parameters, then at its entry point, for instance $p_o$, the top $m+1$ elements of the
stack correspond to the parameters plus the return address.
Thus we only need to consider the top $m+1$ elements of the stack reached at control point $p_o$.
This is the reason why we can analyze the possibly infinite number of configurations encoded in the reachability resulting 
finite automaton, we only inspect a finite subset.
To abbreviate the algorithm that computes \SCDT\ we define such subset of configurations 
in terms of a new automaton obtained by cutting the \MA\ resulting from the reachability analysis. 
\begin{definition}
Given a \MA\ $\A$ recognizing the reachable configurations of a \PDS\ $\P=(P,\Gamma,\Delta)$  
we define the trim automaton $\TMA$ as the automaton
recognizing the configurations in the set: $\{\Conf{p}{w} \in \S \times \Gamma^{*} \mid |w| = \varrho_{ar}(p)+1
\wedge \exists w' \in \Gamma^* \mbox{s.t.} \Conf{p}{ww'} \allowbreak \mbox{is accepted by }
\A\}$

\end{definition}

Intuitively, we cut the automaton and keep only configurations where 
control points $p$ correspond to system function entry points, and 
the stacks are bounded by the number of parameters of the function plus
one to take into account the return address. 
The trim operation will be written as $\Psi$, thus  $\TMA=\Psi(\A)$.  It is
trivial to prove that the $Conf(\TMA)$ is a finite language, in fact the number
of configurations corresponding to valid system call function entry point, and
its finite number of parameters is at most:\\
 $O(|\S|\cdot|\Gamma|\cdot
\operatorname*{max}_{p\in \S}(\varrho_{ar}(p))).$
\vspace{-0.4cm}
\subsection{Extracting \SCDT s} \label{sec:computing} Algorithms \ref{alg:extract}
and \ref{alg:build} detail our approach to extract behaviors.  We assume a
maximum tree height $h\in \N$ is given as input.  We write $\omega[n]$ to
denote the $n$-th element of some word $\omega \in \Gamma^*$.
%

%
\begin{wrapfigure}{r}{0.55\textwidth}
\vspace{-1.25cm}
\begin{algorithm}[H]
\SetAlgoLined
\LinesNumbered
\SetKwFunction{BuildSCDT}{BuildSCDT}
\SetKwFunction{Trees}{$\TS$}


\nllabel{alga:l2} \ForAll{$\P_i$}{
\nllabel{alga:l1}  	$\Trees_i \longleftarrow \emptyset$\;

\nllabel{alga:l3}	$\TMA_i \longleftarrow \Psi(post^*(\Conf{p_i}{\epsilon}))$\;

\nllabel{alga:l4}	\ForAll{$\Conf{p_o}{\omega_o}  \in Conf(\TMA_i)$}
			{
\nllabel{alga:l5}
		$\Trees_i \longleftarrow \Trees_i \cup \{\BuildSCDT{$\Conf{p_o}{\omega_o}$,$h$}\}$\;
\nllabel{alga:l8}
			}
 				}

	
 
 \Return{$\Trees$}\;
 \caption{ExtractSCDT}
\label{alg:extract}
\end{algorithm}
\vspace{-0.8cm}
\end{wrapfigure}

The Algorithm \ref{alg:extract} iterates over the models
$\PPDS_1,\dots,\PPDS_q$ (line \ref{alga:l2}). For each it initializes the set
of resulting trees to the empty set (line \ref{alga:l3}) and computes the
configurations corresponding to system calls that are reachable from the given
initial configuration $\Conf{p_i}{\epsilon}$ (line \ref{alga:l4}).  The initial
configuration is built using the binary executable entry point and an empty
stack. Then, for every configuration corresponding to a system call entry point
$\Conf{p_o}{\omega_o}$  recognized by the trim automaton  (line \ref{alga:l5})
it calls \textbf{BuildSCDT} to build a \SCDT\ tree of height at most $h$ with
the function of entry point $p_o$ as root (line \ref{alga:l8}). 
%

%
%
The \textbf{BuildSCDT} procedure is displayed in Algorithm \ref{alg:build}, it
is used to recursively build a tree. First, the tree to be returned is
initialized to be the origin system call entry point $p_o$ (line \ref{algb:l4}).
When the maximum desired tree height is not reached (line \ref{algb:l12}),
we calculate what are the system calls reached from $\Conf{p_o}{\omega_o}$ (line \ref{algb:l3})
and check for flows to any system call related configuration $\Conf{p_d}{\omega_d}$ (line \ref{algb:l5}).
If a data flow is found between two configurations (line \ref{algb:l7}), i.e.\
there are parameter numbers $n$ and $m$ such that the value passed to system
call at control point $p_o$ is the same as the value
passed in position $m$ of system call at a control point $p_d$, and there is in fact a
flow (line \ref{algb:l8}) i.e.\ the parameter $n$ of the function corresponding to the entry point $p_o$ is of type \emph{out}  and the parameter $m$ of the function corresponding to the entry point $p_d$ is of type \emph{in}, 
we add a new child  with label $n \OutIn m$ to the recursively
computed tree for the destination system call $p_d$ (line \ref{algb:l9}).

\begin{algorithm}[H]
\SetKwFunction{BuildSCDT}{BuildSCDT}
\SetKwFunction{Join}{$ \JOIN\;$}
\SetKwData{Tree}{tree}
\SetAlgoLined
\LinesNumbered
				$\Tree = \varrho_{\lambda}(p_o)$\;\nllabel{algb:l4}
	\If{$h > 0$}{\nllabel{algb:l12}

	$\TMA \longleftarrow \Psi(post^*(\Conf{p_o}{\omega_o}))$\;\nllabel{algb:l3}
		\ForAll{$\Conf{p_d}{\omega_d} \in Conf(\TMA)\setminus \{\Conf{p_o}{\omega_o}\}$}
				{\nllabel{algb:l5}

\nllabel{algb:l6} \ForAll{ $(n,m)$ s.t. $1\leq n \leq \varrho_{ar}(p_o) \wedge 1\leq m\leq \varrho_{ar}(p_d)$}
							{	
\nllabel{algb:l7}			\If{$w_o[n]=w_d[m] \wedge \varrho_\tau(p_o,n)=out \wedge \varrho_\tau(p_d,m)=in$} 
					{

\nllabel{algb:l8}				$\Tree \longleftarrow  \Join (n\OutIn m(\BuildSCDT(\Conf{p_d}{\omega_d},h-1)),\mbox{\Tree}) $\;
\nllabel{algb:l9}					}

							}

				}
	}
	{

					\ForAll{$n \in \{1,..,\varrho_{ar}(p_o)\}$}
							{	

\nllabel{algb:l10}				$\Tree \longleftarrow  \Join(n(w_o[n]),\mbox{\Tree}) $\;
\nllabel{algb:l11}					}

	}						
				
\Return $\Tree$\;
\caption{BuildSCDT}
\label{alg:build}
\end{algorithm}

To add the edges representing the values passed as parameters in the call of $p_o$ we iterate over the possible number of parameters
of the origin system call entry point (line \ref{algb:l10}) and add an edge
with the number of parameter $n$ and the value passed in the stack  $\omega_o[n]$ (line
\ref{algb:l11}).
When the maximum desired tree height is reached, the algorithm returns only 
a tree with $p_o$ as root and the values passed as parameters in the call.

\subsection{Computing malicious behavior trees}
\label{sec:min}


After extracting \SCDT s  for each of the inputed malware programs,
one has to compute which are the ones that correspond to malicious behaviors.
The \SCDT s that correspond to malicious behaviors will be named malicious trees.
To choose the malicious trees  we compute the most frequent subtrees 
in the set $\TS$ of trees extracted from the  set of malware used to train our detector. 
For that we need the notion of \emph{support set}, the set of trees containing some given subtree,
and the notion of \emph{tree support} that gives the ratio of trees containing the subtree to the whole set of trees.

Given a finite set of trees $\TS \subseteq \TAPIZFAP$ and a tree $t\in\TS$, the \emph{support set} of a
tree $t$ is defined as  $T_t = \{t' \mid t \subtree t', t' \in \TS\}$. The
\emph{tree support} of a tree $t$ in the set $\TS$ is calculated as $sup(t) = \frac{|T_t|}{|\TS|}$. For a fixed
threshold $k$ the set of frequent trees of $T$ is the set of trees with
\emph{tree support} greater than $k$.  
\begin{definition} 
For a set of system call dependency trees trees $\TS\subseteq \TAPIZFAP$ and 
a given threshold $k$, a malicious behavior tree is a tree $t \in \TS$ s.t. 
$sup(t) \geq k$. The set of malicious behavior trees will be called \MSCDT. 
\end{definition}


To compute frequent subtrees we specialize the frequent subgraph algorithm presented in 
\cite{Yan:02} to the case of trees. The algorithm receives a set of trees and a 
support value $k\in[0,1]$ and outputs all the subtrees with support at least $k$.
The graph algorithm works by defining a lexicographical
order among the trees and mapping each to a canonical representation using a 
code based on the depth-first search tree generated by the traversal. Using
such lexicographical order the subtree search space can be efficiently explored avoiding 
duplicate computations.

\section{Malware detection}
\label{sec:det}

We show in this section how the malicious behaviors trees that we computed
using our techniques can be used to efficiently detect malware.
%
To decide whether a given program $\Prog$ is malware or not, we apply again the
technique described in Section \ref{sec:imp} to compute the \SCDT s for the
program $\Prog$ being analyzed. Then we check whether such trees correspond to
malicious behaviors, i.e.\ whether such trees contain subtrees that correspond
to malicious behaviors.

\begin{wrapfigure}{r}{0.4\textwidth}	
\vspace{-25pt}
\centering
\begin{tikzpicture}[level distance=35pt,sibling distance=16pt]

\Tree [ .$\GetModuleFileName$ 
	 \edge node[auto=right]{1}; [.$0$ ] 
	  \edge node[auto=left]{$2\OutIn1$}; [.$\CopyFile$ 
							]
	  \edge node[auto=left]{$1\OutIn1$}; [.$\ExitProcess$ 
							]
         ]
\end{tikzpicture}
\vspace{-8pt}
\caption{Behaviors extracted from $\Prog$}
\label{fig:spur}
\vspace{-15pt}
\end{wrapfigure}
To efficiently perform this task, we use tree automata. The
advantage of using tree automata is that we can build the minimal automaton
that recognizes the set of malicious signatures, to obtain a compact and efficient
database. Plus, malware detection, using membership in automata, can be done efficiently.
However, we need to adapt tree automata to suite malware detection, that is, to
define automata that can recognize {\em edge labeled} trees. Furthermore, we
cannot use standard tree automata because the trees that can be generated from
the program $\Prog$ to be analyzed may have arbitrary arities (since we do not know
a priori the behaviors of $\Prog$).  For example the behavior of the program $\Prog$ can be
described by the tree of Fig. \ref{fig:spur} that contains the
self-replication malicious behavior of Fig. \ref{fig:treeex}. However, if we
use a binary tree automaton $\TA$ to recognize the tree of Fig.
\ref{fig:treeex}, $\TA$ will not recognize the tree of Fig. \ref{fig:spur}, because $\Prog$ contains the malicious behaviors and 
extra behaviors. To
overcome this problem we will use unranked tree automata (a.k.a. hedge
automata), since the trees that can be obtained by analysing program $\Prog$ might
have arbitrary arity.

In this section, we show how to use hedge automata for malware detection.
First, we give the formal definition of hedge automata. Then, we show how we
can infer a hedge automaton to recognize malicious behaviors that may be contained in some tree. And we conclude
by explaining how to use it to detect malware.

\subsection{Tree automata for edge labeled trees}
\def\Q{Q^\TA}
\def\QHELTA{\mathcal{A}}
\def\DeltaTA{\Delta^\TA}
\def\toHA{\xrightarrow{}_{\TA}}
\def\toHAS{\xrightarrow{*}_{\TA}}
\begin{definition}	
An \emph{hedge edge labeled tree automaton} (\HELTA) over $\T(\F,\C)$ is a tuple
$\TA=(\Q,\F,\C,\QHELTA,\DeltaTA)$ where $\Q$ is a finite  set of states, $\QHELTA \subseteq \Q$ is
the set of final states, and $\DeltaTA$ is a finite set of rewriting rules
defined as $f(R)
       \to 
       q$ for $f \in \F$, $q \in \Q$,  and $R \subseteq \left[\C(\Q)\right]^*$ is a regular word language over $\C(\Q)$ i.e.\ the language encoding all the possible children of the tree node $f$.

We define a move relation $\toHA$ between ground terms in $\T(\F\cup \Q,\C)$ 
as follows: 
Let $t ,t' \in \T (\F \cup \Q,\C)$, the move relation $\toHA$ is defined by: 
$t \toHA t'$ iff there exists an environment $E \in \T(\F\cup \Q,\C,\{x\})$, a rule $r= f(R) \to q \in \DeltaTA$
such that $t = E[f(c_1(q_1),\dots, c_n(q_n)))]$, and $c_1(q_1) \dots c_n(q_n)\in R$, and  $t'= E[q]$.
We write $\toHAS$ to denote the reflexive and transitive closure of $\toHA$.
Given an \HELTA\  $\TA=(\Q,\F,\C,\QHELTA,\DeltaTA)$ and an edge labeled tree $t$,
we say that $t$ is accepted by a state $q$ if $t \toHAS q$, $t$ is accepted by $\TA$
if $\exists q \in \QHELTA$  s.t. $t \toHAS q$.
\end{definition}

Intuitively, given an input term $t$, a run of $\TA$ on $t$ according to the move
relation $\toHA$ can be done in a bottom-up manner as follows: first, we assign
nondeterministically a state $q$ to each leaf labeled with symbol $f$ if there is
in $\DeltaTA$ a rule of the form $f(R) \to q$ such that $\epsilon \in R$. Then, for each node labeled with
a symbol $f$, and having the terms $c_1(t_1), \dots , c_1(t_n)$ as children, we must collect
the states $q_1, \dots , q_n$ assigned to all its children, i.e., such that $c_i(t_i) \toHAS q_i$
, for $1 \leq i \leq n$, and then associate a state $q$ to 
the node itself if there exists in $\DeltaTA$ a rule $r=f(R)\to q$ such that $q_1 \dots q_n \in R$.
A term $t$ is accepted if $\TA$ reaches the root of $t$ in a final state.

\subsection{Inferring tree automata from malicious behavior trees}
\label{sec:infta}
In this section we show how to infer an
\HELTA\ recognizing trees containing the inferred malicious behaviors. 
Thus, if $t$ is a malicious behavior, and $t'$ is a behavior of a program $\Prog$ 
that is under analysis, such that $t'$ contains the behavior described by $t$, the automaton 
must recognize it.
As an example assume
$t \in \MSCDT$ is a tree of the form $f(c_1(a),c_2(b)))$, s.t. $a,b \in \F$  and $E \in \T(\F,\C,\{x\})$ is an environment, 
then the automaton must recognize trees $t'$ of the form:
\begin{small}
$E[\mathbf{f}(
c^{1}_1(t^{1}_1),\dots,c^{1}_{m_1}(t^{1}_{m_1}),\mathbf{c_1}(\mathbf{a}(e_1)),\allowbreak c^{2}_1(t^{2}_1),\allowbreak \dots,c^{2}_{m_2}(t^{2}_{m_2}),
   \mathbf{c_2}(\mathbf{b}(e_2)),c^{3}_1(t^{3}_1),\dots,c^{3}_{m_{3}}(t^{3}_{m_{3}}) 
)]$
\end{small} 
meaning the tree is embedded in other tree, i.e. $t$ is a subtree of $t'$  and it may have extra behaviors $c_i^j(t_i^j)$ and also extra subtrees $e_1, e_2 \in \T(\F,C)$ as child of the leafs $a$ and $b$.

Let $t \in \MSCDT$, we define the operation $\Omega : \MSCDT \to \TAPIZFAP$ that transforms a malicious tree into  the set of all system call dependency trees
containing the malicious behavior $t$. $\Omega$ is defined inductively as:
\begin{itemize}
\item[(1)] $\Omega(a) = \{a(t) \mid t \in \TAPIZFAP \}$, if $a \in \F$ is a leaf,
\item[(2)] $\Omega(f(c_1(t_1),\dots,c_n(t_n))) = \{f(
c^{1}_1(t^{1}_1),\dots,c^{1}_{n_1}(t^{1}_{n_1}),c_1(\Omega(t_1)),c^{2}_1(t^{2}_1),\allowbreak \dots \allowbreak , \qquad \allowbreak c^{2}_{n_2}(t^{2}_{n_2}),\allowbreak \dots,
  c^{n}_1(t^{n}_1),\dots,c^{n}_{n_n}(t^{n}_{n_n}),c_n(\Omega(t_n)),c^{n+1}_1(t^{n+1}_1),\dots, \allowbreak c^{n+1}_{n_{n+1}}(t^{n+1}_{n_{n+1}})
  )   \mid \qquad \allowbreak  c_i^{j} \in \C \allowbreak \mbox{ and } t_i^{j} \in \TAPIZFAP \}$, otherwise.
\end{itemize}
The first rule asserts that after the leaves of the malicious behavior $t$ there may be other behaviors, while the second asserts that in the nodes of the 
tree $t'$ there may be extra behaviors, for instance the edge to $\ExitProcess$ in Fig. \ref{fig:spur}. Then, if $t$ is a
malicious behavior tree, we would like to compute an \HELTA\ that recognizes all the trees $t'$ s.t. $\exists t'' \in \Omega(t)$ and $t'=E[t'']$ for an environment $E \in \T(\F,\C,\{x\})$.      

Let $\MSCDT$ be a finite set of malicious trees, by definition each $t \in \MSCDT$ is a term of
$\TAPIZFAP$. We infer an \HELTA\ $\TA=(\Q,\F,\C,\QHELTA,\DeltaTA)$
recognizing trees containing malicious behaviors. Where $\Q = \{ q_t \mid t \subtree t' \mbox{ and } t' \in \MSCDT \} \cup \{ q_t \mid t \in \F \}$
 i.e. contains a state for
each subtree of the trees to accept, plus a state for each possible symbol of the alphabet 
that will be reached when a subtree with such symbol as root is not recognized. 
 The final states are defined as the states that correspond to recognizing a malicious tree $\QHELTA = \{ q_t \mid  t \in \MSCDT \}$. And $\DeltaTA$ is defined by rules:
\begin{enumerate}
\item[R1] For all $f\in \F$,  $
f([\C(\Q)]^*) \to q_{f} \in \DeltaTA
$
\item[R2] For all  $t=f(c_1(t_1),\dots,c_n(t_n))$ such that $t \subtree t'$ and $t' \in \MSCDT$,
\begin{small}
$f(\left[\C(\Q)\right]^* \allowbreak c_1(q_{t_1})\left[\C(\Q)\right]^*\dots\left[\C(\Q)\right]^*c_n(q_{t_n})\left[\C(\Q)\right]^*) \to q_{f(c_1(t_1),\dots,c_n(t_n))} \in \DeltaTA
$\end{small}
\item[R3] For all final state $q_t \in \QHELTA$ and all $f\in \F$,
$
f(\left[\C(\Q)\right]^*,q_t,\left[\C(\Q)\right]^*) \to q_t \in \DeltaTA
$
\end{enumerate}

Intuitively, for $f \in \F$, states $q_f$ recognize all the terms whose roots are $f$. This is ensured by R1. In the rules 
$[\C(\Q)]^*$ allows to recognize terms $t$ in (1) and $c_i^j(t_i^j)$ in (2). For a subtree $t_i$ 
of a malicious behavior $t$ in every \MSCDT, $q_{t_i}$ recognizes $\Omega(q_{t_i})$. This is ensured by rules R2, which 
 guarantees that a malicious tree containing extra behaviors is recognized. R3 guarantees that a tree containing a malicious behavior as subtree is recognized, i.e. R3 
ensures that if $t$ is a malicious behavior and $E\in\T(\F,\C,\{x\})$ is an environment, then $q_t$ recognizes $E[t']$ for
every $t'$ in $\Omega(t)$.

In the following we assert that if a tree $t'$ contains a subtree $t''$ that
contains a malicious behavior $t$, then the inferred automaton will recognize it
(even if there are extra behaviors). Proof should follow by induction.

\begin{theorem}
Given a term $t \in \MSCDT$, and $t' \in \TAPIZFAP$. If there $\exists t'' \in \Omega(t)$ and an environment $E\in\T(\F,\C,\{x\})$ and $t'=E[t'']$, then $t' \toHAS q_t$. 
\label{thm:ta}
\end{theorem}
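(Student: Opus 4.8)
The plan is to prove the statement by structural induction on the malicious tree $t \in \MSCDT$, following the inductive definition of $\Omega$. The key observation is that the rule set $\DeltaTA$ was designed precisely to mirror $\Omega$: rule R1 handles the ``extra subtree below a leaf'' case of $\Omega$ clause (1), rule R2 handles the ``extra sibling behaviors at an internal node'' case of $\Omega$ clause (2), and rule R3 handles the final embedding of the malicious subtree into an arbitrary environment $E$. So the induction should go through by unwinding these three rules in the right order.

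First I would prove an auxiliary claim: for every $f \in \F$ and every ground term $s \in \TAPIZFAP$ whose root is $f$, we have $s \toHAS q_f$. This is itself a straightforward induction on the height of $s$: write $s = f(c_1(s_1),\dots,c_k(s_k))$; by induction each $c_i(s_i) \toHAS c_i(q_{f_i})$ where $f_i$ is the root symbol of $s_i$; since $c_1(q_{f_1})\cdots c_k(q_{f_k}) \in [\C(\Q)]^*$, rule R1 applied at the root gives $s \toHAS q_f$. (The base case $k=0$ uses $\epsilon \in [\C(\Q)]^*$.) This claim discharges the ``don't care'' parts of the later argument, namely all the spurious children $c_i^j(t_i^j)$ and the extra subtrees $e_1, e_2$ hanging below leaves.

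Next I would prove, by induction on the structure of $t$, the core statement: for every $t'' \in \Omega(t)$, we have $t'' \toHAS q_t$. In the base case $t = a$ with $a \in \F$ a leaf, $\Omega(a) = \{a(s) \mid s \in \TAPIZFAP\}$, and $a(s) \toHAS q_a = q_t$ directly by R1 (using the auxiliary claim on $s$ to reduce it to a state, then R1 at the root with $\epsilon$-or-nonempty child word). In the inductive case $t = f(c_1(t_1),\dots,c_n(t_n))$, a term $t'' \in \Omega(t)$ has the displayed shape with each $c_i(\Omega(t_i))$-slot filled by some $c_i(u_i)$ with $u_i \in \Omega(t_i)$, interleaved with spurious $c_i^j(t_i^j)$ children. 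By the induction hypothesis $u_i \toHAS q_{t_i}$, so $c_i(u_i) \toHAS c_i(q_{t_i})$; by the auxiliary claim each spurious child reduces to a state in $\C(\Q)$; hence the whole child word of $f$ in $t''$ reduces to a word in $[\C(\Q)]^* c_1(q_{t_1})[\C(\Q)]^*\cdots[\C(\Q)]^* c_n(q_{t_n})[\C(\Q)]^*$, and rule R2 fires at the root to give $t'' \toHAS q_{f(c_1(t_1),\dots,c_n(t_n))} = q_t$. Finally, to conclude the theorem, given $t' = E[t'']$ for an environment $E \in \T(\F,\C,\{x\})$, I apply the core statement to get $t'' \toHAS q_t$, so $t' \toHAS E[q_t]$; then I induct on the structure of the environment $E$, using rule R3 at each node on the path from the root of $E$ to the hole: each R3 rule $f([\C(\Q)]^*, q_t, [\C(\Q)]^*) \to q_t$ lets $q_t$ ``absorb'' its parent once all sibling subtrees have been reduced to states via the auxiliary claim, so $E[q_t] \toHAS q_t$, giving $t' \toHAS q_t$ as required.

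The main obstacle I anticipate is bookkeeping rather than conceptual: making the manipulation of the regular child-language $R = [\C(\Q)]^* c_1(q_{t_1})\cdots c_n(q_{t_n})[\C(\Q)]^*$ precise, in particular checking that an arbitrary interleaving of the mandatory children $c_i(q_{t_i})$ with arbitrarily many spurious $\C(\Q)$-children indeed lies in this language (the definition of $\Omega$ places spurious children only between consecutive mandatory ones and at the two ends, which matches the $[\C(\Q)]^*$ blocks exactly, so this is fine but needs to be stated carefully). A second minor subtlety is that the move relation $\toHA$ rewrites children to \emph{states}, not to arbitrary terms, so one must be careful to first reduce every child fully to a single state before applying a rule at the parent; the auxiliary claim plus the induction hypothesis together guarantee this is always possible, but the write-up should make the order of rewriting explicit.
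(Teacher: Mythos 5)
Your proposal is correct and follows essentially the same route as the paper's proof: structural induction on $t$ with R1 handling the leaf case of $\Omega$, R2 the internal-node case, and the environment $E$ absorbed by repeated application of R3 (which the paper isolates as a separate lemma, exactly matching your final step). Your explicit auxiliary claim that every ground term with root $f$ reduces to $q_f$ is a detail the paper leaves implicit, and making it explicit is a welcome tightening rather than a different approach.
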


%
%

\subsection{Malware detection}

The detection phase works as
follows. Given a program $\Prog$ to analyze we build a \PDS\ model $\P$ using the
approach described in Section \ref{sec:inf}, then we extract the set of
behaviors $\TS$ contained in $\Prog$ using the approach in Section \ref{sec:imp}.
 Then we use the automaton $\TA$ to search if any of the trees in $\TS$
can be matched by the automaton.  If that is the case the program $\Prog$ is deemed
malware.  

\paragraph{\bf Example.} Suppose the tree in Fig. \ref{fig:spur} was extracted and the  tree in Fig. \ref{fig:treeex} is the only malicious behavior in \MSCDT, which in turn is defined using
$\C=\{1,2 \OutIn 1\}$
and
$\F=\{0,\CopyFile,\ExitProcess,\GetModuleFileName\}$. We  define an automaton $\TA$ where the set of states is
$
\Q = \{q_0, \allowbreak q_{\ExitProcess},q_{\CopyFile},q_{\GetModuleFileName(1(0),2 \OutIn 1 (\CopyFile))}\}
$, the accepting set is
$
\QHELTA = \{q_{\GetModuleFileName(1(0), \allowbreak 2 \OutIn 1 (\CopyFile))}\}
$, and $\DeltaTA$ contains rules processing the leaves: 
$
0([\C(\Q)]^*) \toHA q_{0}
$, 
$
\ExitProcess([\C(\Q)]^*) \toHA q_{\ExitProcess}
$, and 
$
\CopyFile([\C(\Q)]^*) \allowbreak \toHA q_{\CopyFile}
$. And a rule 
$
\GetModuleFileName(\allowbreak [\C(\Q)]^*,1(q_{0}), \allowbreak[\C(\Q)]^*, \allowbreak 2\OutIn1(q_{\CopyFile}), \allowbreak [\C(\Q)]^*)  \toHA  q_{\GetModuleFileName(1(0),2\OutIn1(\CopyFile))}
$ processing the whole malicious behavior of Fig. \ref{fig:treeex}. 


\section{Experiments}
\label{sec:res}


To evaluate our approach, we implemented a tool prototype that was tested on a
dataset of real malware and benign programs. The input dataset of malware
contains \Nmalware\ malware instances (Virus, Backdoors, Trojans, Worms,\dots)
collected from virus repositories as VX Heavens and a disjoint dataset of
\Nbenware\ benign files collected from a Windows XP fresh operating system
installation. 
We arbitrarily split the malware dataset into a training and test
group.  The train dataset was used to infer the malicious trees that were used
in the detection of the samples of the test group. We were able to detect
\Mtest\ malware files using the malicious trees inferred from \Mtrain\ malware
files, and show that benign programs are benign, thus a \FPR\ false positive rate.  




\subsection{Inferring malicious behaviors} 

To infer malicious behaviors, we transformed each of the \Mtrain\ malware
binary files into a \PDS\ model using the approach described in Section
\ref{sec:inf}. To implement the oracle $\O$, we use the \emph{PoMMaDe} tool
\cite{Song:12a} that uses Jakstab \cite{Kinder:08} and IDA Pro
\cite{hex:2011}. Jakstab performs static analysis of the binary program.
However, it does not allow to extract API functions information, so IDA Pro is
used to obtain such information, thus obtaining $\varrho_{ar}$ and
$\varrho_{\lambda}$.  The $\varrho_{\tau}$ function was obtained by querying
the available information in the MSDN 
website.

We apply Algorithm \ref{alg:extract} 
to the \PDS\ models to extract \SCDT s for each of the malware instances. The current results were obtained with an $h$ value of 2. In
practice, to avoid the overapproximation of malicious trees, in the generation
of \SCDT s for the detection phase we consider the condition in line
\ref{algb:l8} of Algorithm \ref{alg:build}, $w_o[n]=w_d[m]$ true only when we
know the value outputted by the oracle is precise.  
%
%

\begin{wrapfigure}{r}{0.3\textwidth}
\vspace{-20pt}
\begin{tiny}
\centering
\begin{tabular}{lr}
\hline
Name  & \#\\
\hline
\hline
Backdoor.Win32.Agent & 26 \\
Worm.Win32.AutoRun & 13 \\
Email-Worm.Win32.Bagle & 19 \\
Email-Worm.Win32.Batzback & 4 \\
Backdoor.Win32.Bifrose & 46 \\
Backdoor.Win32.Hupigon & 5 \\
Email-Worm.Win32.Kelino & 7 \\
Trojan-PSW.Win32.LdPinch & 13 \\
Email-Worm.Win32.Mydoom & 26 \\
Email-Worm.Win32.Nihilit & 7 \\
Backdoor.Win32.SdBot & 14 \\
Backdoor.Win32.Small & 13 \\
\hline
Total & 193\\
\hline
\end{tabular}
\captionof{table}{Training dataset} 
\label{tab:trainf}
\end{tiny}
\vspace{-20pt}
\end{wrapfigure}

To compute
the \MSCDT\ we encode the extracted \SCDT\ as
graphs and try to calculate the most frequent subgraphs.  We use the gSpan
\cite{Yan:02} tool for that, it computes frequent subgraph structures using a
depth-first tree search over a canonical labeling of graph edges relying on the
linear ordering property of the labeling to prune the search space. The tool
has been applied in various domains as active chemical compound structure
mining and its performance is competitive among other tools \cite{Worlein:05}.
The tool supports only undirected graphs, therefore a mismatch with the trees
(that can be seen as rooted, acyclic direct graphs) used in this work. The
mismatch is overcome via a direction tag in the graph labels.

For the \Mtrain\ files extracted \SCDT s we have run the gSpan tool with
support 0.6\%. This is a tunable value for which we chose the one that allows better detection results. With this value we obtained 1026 subtrees (\MSCDT s), and best detection results.
From the inferred malicious trees output from gSpan, we build a 
 tree automaton  recognizing such trees.

 The training dataset contains 12 families of malware summarized in Table
\ref{tab:trainf}.
In average, our tool
extracts 7 \SCDT s in 30 seconds for each malware file. 
To store the 1026 discovered \MSCDT s the automaton file used 24Kb of memory.

\subsection{Detecting malware}

Malware detection is reduced to generating \SCDT s  and checking whether 
they are recognized by the inferred automaton.
Thus, to perform detection on an input binary file, we model it as \PDS\ using the
approach described in Section \ref{sec:inf} and extract \SCDT s using the
approach detailed in Section \ref{sec:computing}. If any subtree of the extracted 
tree is recognized by the automaton recognizing the malicious behaviors, 
we decide the binary sample is malware. We implemented such procedure in our 
tool and were able to detect \Mtest\ malware samples from \MtestF\ different families.

In  Table \ref{tab:testf} we show the range of malware families and number of
samples that our tool detects as malware. 
In average, our tool extracts 64 \SCDT s in 2.15 seconds for each
file (this value may be largely improved given that runtime efficiency was not a main goal of the prototype design). The discrepancy in the number of trees generated (compared to the
training set) is justified by an implementation choice regarding the oracle
approximation of the set of values pushed to the stack. In the generation of
\SCDT s for the detection phase we consider the condition in line \ref{algb:l8}
of Algorithm \ref{alg:build}, $w_o[n]=w_d[m]$ true even if the values are
approximated. Such cases were discarded in the generation of \SCDT s in the
inference step where it holds only when the oracle outputs precise values. The
automaton tree recognition execution time is negligible ($< 0.08$ secs) in all
cases.
To check the robustness of the detector, we applied it to a set of \Nbenware\
benign programs. Our tool was able to classify such programs as benign,
obtaining a \FPR\ false positive rate. In 88\% of the cases 
the tool extracts \SCDT s and at least in 44\% of the files 
there is a call to a function involved in malicious behavior (e.g.
\GetModuleFileName, \ShellExecute,\dots), but no tree was recognized as malicious. This value is in line with the values detailed in
\cite{Christodorescu:07,Fredrikson:10} and better than the 5\% reported in
\cite{Babic:11}.   

\begin{table}[htbp!]
\begin{minipage}{0.2425\textwidth}
\centering
\begin{tiny}
\begin{tabular}{lr}
\hline
Name  & \#\\
\hline
\hline
Backdoor.Win32.AF &        1 \\
Backdoor.Win32.Afbot &        1 \\
Backdoor.Win32.Afcore &        6 \\
Backdoor.Win32.Agent &       66 \\
Backdoor.Win32.Agobot &       47 \\
Backdoor.Win32.Alcodor &        1 \\
Backdoor.Win32.Antilam &        9 \\
Backdoor.Win32.Apdoor &        6 \\
Backdoor.Win32.Assasin &        3 \\
Backdoor.Win32.Asylum &        8 \\
Backdoor.Win32.Avstral &        2 \\
Backdoor.Win32.BLA &        2 \\
Backdoor.Win32.BNLite &        1 \\
Backdoor.Win32.BO2K &        6 \\
Backdoor.Win32.Bancodor &        1 \\
Backdoor.Win32.Bandok &        1 \\
Backdoor.Win32.Banito &        4 \\
Backdoor.Win32.Beastdoor &        6 \\
Backdoor.Win32.Bifrose &        5 \\
Backdoor.Win32.BoomRaster &        1 \\
Backdoor.Win32.Breplibot &        6 \\
Backdoor.Win32.Bushtrommel &        2 \\
Backdoor.Win32.ByShell &        1 \\
Backdoor.Win32.Cabrotor &        1 \\
Backdoor.Win32.Cafeini &        1 \\
Backdoor.Win32.Cheng &        1 \\
Backdoor.Win32.Cigivip &        1 \\
Backdoor.Win32.Cmjspy &        8 \\
Backdoor.Win32.Cocoazul &        2 \\
Backdoor.Win32.Codbot &        4 \\
Backdoor.Win32.Coldfusion &        3 \\
Backdoor.Win32.CommInet &        3 \\
Backdoor.Win32.Coredoor &        1 \\
Backdoor.Win32.Crunch &        1 \\
Backdoor.Win32.DKangel &        2 \\
Backdoor.Win32.DRA &        4 \\
Backdoor.Win32.DSNX &        3 \\
Backdoor.Win32.DarkFtp &        3 \\
Backdoor.Win32.DarkMoon &        1 \\
Backdoor.Win32.Delf &       31 \\
Backdoor.Win32.Dindang &        1 \\
Backdoor.Win32.DragonIrc &        1 \\
Backdoor.Win32.Dumador &        3 \\
Backdoor.Win32.Expir &        1 \\
Backdoor.Win32.HacDef &        2 \\
Backdoor.Win32.Hackarmy &        3 \\
Backdoor.Win32.Hupigon &        4 \\
Backdoor.Win32.IRCBot &        6 \\
Backdoor.Win32.Ierk &        1 \\
Backdoor.Win32.Jacktron &        1 \\
Backdoor.Win32.Jeemp &        1 \\
Backdoor.Win32.Katherdoor &        7 \\
Backdoor.Win32.Katien &        2 \\
Backdoor.Win32.Ketch &        4 \\
Backdoor.Win32.Kidterror &        1 \\
Backdoor.Win32.Konik &        1 \\
Backdoor.Win32.Krepper &        2 \\
Backdoor.Win32.Labrus &        1 \\
Backdoor.Win32.LanFiltrator &        2 \\
Backdoor.Win32.LanaFTP &        1 \\
Backdoor.Win32.Laocoon &        1 \\
Backdoor.Win32.Latinus &        5 \\
Backdoor.Win32.Lemerul &        1 \\
Backdoor.Win32.Lesbot &        1 \\
Backdoor.Win32.Levelone &        2 \\
Backdoor.Win32.Liondoor &        1 \\
Backdoor.Win32.Lithium &        3 \\
Backdoor.Win32.Litmus &        1 \\
Backdoor.Win32.LittleBusters &        1 \\
Backdoor.Win32.LittleWitch &        1 \\
Backdoor.Win32.Livup &        1 \\
Backdoor.Win32.Lixy &        1 \\
Backdoor.Win32.Lurker &        1 \\
Backdoor.Win32.Lyusane &        1 \\
Backdoor.Win32.MSNMaker &        1 \\
Backdoor.Win32.MServ &        1 \\
Backdoor.Win32.MainServer &        1 \\
Backdoor.Win32.Matrix &        3 \\
Backdoor.Win32.Medbot &        1 \\
Backdoor.Win32.Mellpon &        2 \\
Backdoor.Win32.Metarage &        1 \\
Backdoor.Win32.Mhtserv &        1 \\
Backdoor.Win32.Micronet &        1 \\
Backdoor.Win32.MiniCommander &        1 \\
\hline
\end{tabular}
\end{tiny}
\end{minipage}
\begin{minipage}{0.2425\textwidth}
\centering
\begin{tiny}
\begin{tabular}{lr}
\hline
Name  & \#\\
\hline
\hline
Backdoor.Win32.MoonPie &        1 \\
Backdoor.Win32.Mowalker &        1 \\
Backdoor.Win32.Mtexer &        2 \\
Backdoor.Win32.Mydons &        1 \\
Backdoor.Win32.Ncx &        1 \\
Backdoor.Win32.NerTe &        3 \\
Backdoor.Win32.NetControl &        2 \\
Backdoor.Win32.NetShadow &        1 \\
Backdoor.Win32.NetSpy &        8 \\
Backdoor.Win32.Netbus &        2 \\
Backdoor.Win32.Netdex &        2 \\
Backdoor.Win32.Netpocalipse &        1 \\
Backdoor.Win32.Neurotic &        2 \\
Backdoor.Win32.Nuclear &        3 \\
Backdoor.Win32.Nucledor &        2 \\
Backdoor.Win32.Nyrobot &        1 \\
Backdoor.Win32.Optix &        9 \\
Backdoor.Win32.PPCore &        1 \\
Backdoor.Win32.PPdoor &        2 \\
Backdoor.Win32.Pacak &        1 \\
Backdoor.Win32.Padodor &        5 \\
Backdoor.Win32.PcClient &       12 \\
Backdoor.Win32.PeepViewer &        1 \\
Backdoor.Win32.Peers &        2 \\
Backdoor.Win32.Penrox &        1 \\
Backdoor.Win32.Pepbot &        1 \\
Backdoor.Win32.Pingdoor &        1 \\
Backdoor.Win32.Pipes &        1 \\
Backdoor.Win32.Plunix &        1 \\
Backdoor.Win32.Pornu &        1 \\
Backdoor.Win32.Probot &        1 \\
Backdoor.Win32.Proxydor &        2 \\
Backdoor.Win32.Psychward &        5 \\
Backdoor.Win32.Ptakks &        1 \\
Backdoor.Win32.Puddy &        1 \\
Backdoor.Win32.R3C &        1 \\
Backdoor.Win32.RAT &        2 \\
Backdoor.Win32.RDR &        1 \\
Backdoor.Win32.Rbot &        8 \\
Backdoor.Win32.Redkod &        4 \\
Backdoor.Win32.Revenge &        1 \\
Backdoor.Win32.Rirc &        1 \\
Backdoor.Win32.Robobot &        1 \\
Backdoor.Win32.Ronater &        1 \\
Backdoor.Win32.Rootcip &        1 \\
Backdoor.Win32.Roron &        1 \\
Backdoor.Win32.RtKit &        4 \\
Backdoor.Win32.Ruledor &        4 \\
Backdoor.Win32.SPing &        3 \\
Backdoor.Win32.SatanCrew &        1 \\
Backdoor.Win32.Sbot &        2 \\
Backdoor.Win32.SdBot &       63 \\
Backdoor.Win32.Seed &        3 \\
Backdoor.Win32.Serman &        1 \\
Backdoor.Win32.ShBot &        1 \\
Backdoor.Win32.Shakdos &        1 \\
Backdoor.Win32.Shox &        1 \\
Backdoor.Win32.SilverFTP &        1 \\
Backdoor.Win32.Sinf &        1 \\
Backdoor.Win32.Sinit &        4 \\
Backdoor.Win32.SkyDance &        1 \\
Backdoor.Win32.Small &       22 \\
Backdoor.Win32.Sporkbot &        1 \\
Backdoor.Win32.SpyBoter &        9 \\
Backdoor.Win32.Stang &        1 \\
Backdoor.Win32.Stats &        1 \\
Backdoor.Win32.Stigmador &        1 \\
Backdoor.Win32.SubSeven &        1 \\
Backdoor.Win32.Sumatrix &        1 \\
Backdoor.Win32.Suslix &        1 \\
Backdoor.Win32.Symes &        1 \\
Backdoor.Win32.Sysinst &        1 \\
Backdoor.Win32.System33 &        1 \\
Backdoor.Win32.Sytr &        1 \\
Backdoor.Win32.TDS &        3 \\
Backdoor.Win32.Takit &        1 \\
Backdoor.Win32.Tasmer &        1 \\
Backdoor.Win32.Telemot &        1 \\
Backdoor.Win32.TheThing &        3 \\
Backdoor.Win32.Thunk &        1 \\
Backdoor.Win32.Tonerok &        3 \\
Backdoor.Win32.URCS &        2 \\
Backdoor.Win32.Undernet &        1 \\
Backdoor.Win32.Unwind &        1 \\
\hline
\end{tabular}
\end{tiny}
\end{minipage}
\begin{minipage}{0.2425\textwidth}
\centering
\begin{tiny}
\begin{tabular}{lr}
\hline
Name  & \#\\
\hline
\hline
Backdoor.Win32.UpRootKit &        1 \\
Backdoor.Win32.Ursus &        1 \\
Backdoor.Win32.Utilma &        1 \\
Backdoor.Win32.VB &        2 \\
Backdoor.Win32.VHM &        1 \\
Backdoor.Win32.Vatos &        1 \\
Backdoor.Win32.Verify &        1 \\
Backdoor.Win32.WMFA &        1 \\
Backdoor.Win32.WRT &        1 \\
Backdoor.Win32.WbeCheck &        3 \\
Backdoor.Win32.Webdor &        6 \\
Backdoor.Win32.Whisper &        1 \\
Backdoor.Win32.Wilba &        1 \\
Backdoor.Win32.Winker &        5 \\
Backdoor.Win32.WinterLove &        7 \\
Backdoor.Win32.Wisdoor &        7 \\
Backdoor.Win32.Wollf &        4 \\
Backdoor.Win32.XBot &        1 \\
Backdoor.Win32.XConsole &        1 \\
Backdoor.Win32.XLog &        2 \\
Backdoor.Win32.Xdoor &        2 \\
Backdoor.Win32.Y2KCount &        1 \\
Backdoor.Win32.Ythac &        1 \\
Backdoor.Win32.Zerg &        1 \\
Backdoor.Win32.Zombam &        1 \\
Backdoor.Win32.Zomby &        1 \\
Constructor.Win32.Delf &        1 \\
Constructor.Win32.ETVM &        2 \\
Constructor.Win32.EvilTool &        1 \\
Constructor.Win32.MS04-032 &        1 \\
Constructor.Win32.MS05-009 &        1 \\
Constructor.Win32.SPL &        1 \\
Constructor.Win32.SS &        2 \\
Constructor.Win32.VCL &        1 \\
DoS.Win32.Aspcode &        1 \\
DoS.Win32.Ataker &        1 \\
DoS.Win32.DStorm &        1 \\
DoS.Win32.Igemper &        1 \\
DoS.Win32.SQLStorm &        1 \\
Email-Worm.Win32.Anar &        2 \\
Email-Worm.Win32.Android &        1 \\
Email-Worm.Win32.Animan &        1 \\
Email-Worm.Win32.Anpir &        1 \\
Email-Worm.Win32.Ardurk &        2 \\
Email-Worm.Win32.Asid &        1 \\
Email-Worm.Win32.Assarm &        1 \\
Email-Worm.Win32.Atak &        1 \\
Email-Worm.Win32.Avron &        2 \\
Email-Worm.Win32.Bagle &        3 \\
Email-Worm.Win32.Bagz &        5 \\
Email-Worm.Win32.Banof &        1 \\
Email-Worm.Win32.Bater &        1 \\
Email-Worm.Win32.Batzback &        3 \\
Email-Worm.Win32.Blebla &        1 \\
Email-Worm.Win32.Bumdoc &        2 \\
Email-Worm.Win32.Charch &        1 \\
Email-Worm.Win32.Cholera &        1 \\
Email-Worm.Win32.Coronex &        3 \\
Email-Worm.Win32.Cult &        1 \\
Email-Worm.Win32.Delf &        4 \\
Email-Worm.Win32.Desos &        1 \\
Email-Worm.Win32.Donghe &        3 \\
Email-Worm.Win32.Drefir &        1 \\
Email-Worm.Win32.Duksten &        2 \\
Email-Worm.Win32.Dumaru &       10 \\
Email-Worm.Win32.Energy &        1 \\
Email-Worm.Win32.Entangle &        1 \\
Email-Worm.Win32.Epon &        1 \\
Email-Worm.Win32.Eyeveg &        3 \\
Email-Worm.Win32.Fix2001 &        1 \\
Email-Worm.Win32.Frethem &        2 \\
Email-Worm.Win32.Frubee &        1 \\
Email-Worm.Win32.GOPworm &        1 \\
Email-Worm.Win32.Gift &        2 \\
Email-Worm.Win32.Gismor &        1 \\
Email-Worm.Win32.Gizer &        2 \\
Email-Worm.Win32.Gunsan &        2 \\
Email-Worm.Win32.Haltura &        1 \\
Email-Worm.Win32.Hanged &        1 \\
Email-Worm.Win32.Happy &        1 \\
Email-Worm.Win32.Ivalid &        1 \\
Email-Worm.Win32.Jeans &        1 \\
Email-Worm.Win32.Kadra &        1 \\
Email-Worm.Win32.Keco &        3 \\
\hline
\end{tabular}
\end{tiny}
\end{minipage}
\begin{minipage}{0.2425\textwidth}
\centering
\begin{tiny}
\begin{tabular}{lr}
\hline
Name  & \#\\
\hline
\hline
Email-Worm.Win32.Kelino &        6 \\
Email-Worm.Win32.Kergez &        1 \\
Email-Worm.Win32.Kipis &        2 \\
Email-Worm.Win32.Kirbster &        1 \\
Email-Worm.Win32.Klez &        9 \\
Email-Worm.Win32.Lacrow &        2 \\
Email-Worm.Win32.Lara &        1 \\
Email-Worm.Win32.Lentin &       10 \\
Email-Worm.Win32.Locksky &        2 \\
Email-Worm.Win32.Lohack &        3 \\
Email-Worm.Win32.LovGate &        3 \\
Email-Worm.Win32.Mescan &        1 \\
Email-Worm.Win32.Mimail &        1 \\
Email-Worm.Win32.Miti &        1 \\
Email-Worm.Win32.Modnar &        1 \\
Email-Worm.Win32.Mydoom &        8 \\
Email-Worm.Win32.NWWF &        1 \\
Email-Worm.Win32.Navidad &        1 \\
Email-Worm.Win32.NetSky &        2 \\
Email-Worm.Win32.NetSup &        1 \\
Email-Worm.Win32.Netav &        1 \\
Email-Worm.Win32.Newapt &        6 \\
Email-Worm.Win32.Nihilit &        1 \\
Email-Worm.Win32.Nirky &        1 \\
Email-Worm.Win32.Paroc &        1 \\
Email-Worm.Win32.Parrot &        1 \\
Email-Worm.Win32.Pepex &        2 \\
Email-Worm.Win32.Pikis &        2 \\
Email-Worm.Win32.Plage &        1 \\
Email-Worm.Win32.Plexus &        1 \\
Email-Worm.Win32.Pnguin &        1 \\
Email-Worm.Win32.Poo &        1 \\
Email-Worm.Win32.Postman &        1 \\
Email-Worm.Win32.Qizy &        1 \\
Email-Worm.Win32.Rammer &        1 \\
Email-Worm.Win32.Rapita &        1 \\
Email-Worm.Win32.Rayman &        1 \\
Email-Worm.Win32.Repah &        2 \\
Email-Worm.Win32.Ronoper &       20 \\
Email-Worm.Win32.Roron &       23 \\
Email-Worm.Win32.Sabak &        1 \\
Email-Worm.Win32.Savage &        2 \\
Email-Worm.Win32.Scaline &        1 \\
Email-Worm.Win32.Scrambler &        1 \\
Email-Worm.Win32.Seliz &        1 \\
Email-Worm.Win32.Sharpei &        1 \\
Email-Worm.Win32.Silly &        1 \\
Email-Worm.Win32.Sircam &        1 \\
Email-Worm.Win32.Skudex &        2 \\
Email-Worm.Win32.Sonic &        4 \\
Email-Worm.Win32.Stator &        1 \\
Email-Worm.Win32.Stopin &        3 \\
Email-Worm.Win32.Sunder &        1 \\
Email-Worm.Win32.Svoy &        2 \\
Email-Worm.Win32.Swen &        1 \\
Email-Worm.Win32.Tanatos &        3 \\
Email-Worm.Win32.Taripox &        2 \\
Email-Worm.Win32.Totilix &        1 \\
Email-Worm.Win32.Trilissa &        4 \\
Email-Worm.Win32.Trood &        2 \\
Email-Worm.Win32.Unis &        1 \\
Email-Worm.Win32.Urbe &        3 \\
Email-Worm.Win32.Valha &        1 \\
Email-Worm.Win32.Volag &        1 \\
Email-Worm.Win32.Vorgon &        2 \\
Email-Worm.Win32.Warezov &        1 \\
Email-Worm.Win32.Winevar &        1 \\
Email-Worm.Win32.Wozer &        1 \\
Email-Worm.Win32.Xanax &        2 \\
Email-Worm.Win32.Yanz &        1 \\
Email-Worm.Win32.Yenik &        1 \\
Email-Worm.Win32.Zircon &        4 \\
Exploit.Win32.Agent &        3 \\
Exploit.Win32.AntiRAR &        1 \\
Exploit.Win32.CAN &        1 \\
Exploit.Win32.CVE-2006-1359 &        1 \\
Exploit.Win32.CrobFTP &        1 \\
Exploit.Win32.DCom &        3 \\
Exploit.Win32.DameWare &        1 \\
Net-Worm.Win32.Muma &        1 \\
Trojan-PSW.Win32.LdPinch &       16 \\
Worm.Win32.AutoRun &       34 \\
\\
\hline
Total & 983\\
\hline
\end{tabular}
\end{tiny}
\end{minipage}
\caption{Test dataset name family and number of samples (\#) detected}
\label{tab:testf}
\end{table}

\section{Related work}
\label{sec:rel}

Malicious behaviors have been defined in different ways. The foundational
approaches via computable functions \cite{Adleman:88},  based in  Kleene's
recursion theorem \cite{Bonfante:05,Bonfante:06,Bonfante:07}, or the neat
definition using MALog \cite{Kramer:10} capture the essence of such behaviors,
but are too abstract to be used in practice or require the full specification
of software functionality.
Our work is close to the approaches using model checking and temporal logic
formulas as malicious behavior specification \cite{Song:12b,Song:12a}. In such 
works specifications have to be designed by hand while we are able to learn 
them automatically. Some of the trees we infer describe malicious 
behaviors encoded in such formulas. 

Regarding semantic signature inference there are the works 
\cite{Christodorescu:07,Fredrikson:10} where the extraction of behaviors is
based on dynamic analysis of executables. From the execution traces collected,
data flow dependencies among system calls are recovered by comparing parameters
and type information. The outcome are dependence graphs where the nodes are
labeled by system function names and  the edges capture the dependencies
between the system calls. 
Another dynamic analysis based approach is the one of \cite{Babic:11} where trees, 
alike ours, express the same kind of data flows between 
nodes representing system calls.
Both approaches are limited by the drawbacks of dynamic analysis. For instance, 
time limitations,  
 limited
system call tracing or an overhead up to $90\times$ slower during execution
\cite{Skaletsky:10}. Plus, from the dataset made publicly available in
\cite{Babic:11}, we notice the signatures involve only functions from the Native
API library. Our approach has the advantage of being API independent, thus
the level of analysis may be tuned, plus Win32 API function based signatures
should be shorter as each high level function should be translated into a set of
calls to the Native API functions.

In \cite{Bonfante:09} the authors propose to learn behaviors of binary files 
by extracting program control-flow graphs using dynamic analysis. Such graphs 
contain assembly instructions that correspond to control flow information e.g. $jmp$, but 
that introduces more possibilities to circumvent such signatures by rewriting 
the code. From the graphs, trees are computed and the union of all such trees 
is used to infer an automaton that is used in detection. Our inference does
not output all the trees, only the most frequent, improving the learning 
process and generalizing from the training dataset.  

An alternative to semantic signatures are works based on machine
learning  approaches as \cite{Tahan:12}, which shows that by mining ``$n-$grams''
(a sequence of $n$ bits), it is possible to distinguish malware from benign program.  
In our approach, the distinguishing features (malicious behaviors) can be seen as 
traces of program execution, thus having a meaning that can be more easily understood.


\section{Conclusion}
\label{sec:conc}
In this work, we have shown how to combine {\em static} reachability analysis techniques to infer
malware semantic signatures in the form of malicious trees, which describe
the data flows among system calls. 
Our experiments show
that the approach can be used to automatically infer specifications of
malicious behaviors and detect several malware samples from an a priori given
smaller set of malware.  We were able to detect
\Mtest\ malware files using the malicious trees inferred from \Mtrain\ malware
files, and applied the detector to \Nbenware\ benign files obtaining a \FPR\ false positive rate.





As future work we envisage the improvement of the binary modeling techniques,
for example enriching the function parameter type system to allow better
approximations.  The usage of more advanced  mining techniques, e.g.\
structural leap mining used in \cite{Fredrikson:10}, can be used to improve the
learning approach.  In another direction, given the relation between modal formulas and tree models a
comparison between our approach and the approach in \cite{Song:12b} concerning
expressiveness and complexity is envisaged.  Finally, a complexity study with
respect to the depth of the trees extraction (parameter $h$ in Algorithm
\ref{alg:extract}) and size of the \HELTA\  would be another alternative
direction.


Summing up, the reachability analysis of \PDS\ models of executables can play a major role in the
malware specification inference domain. 
The ability to precisely analyze stack
behavior enables the extraction of executables system call
data flows and overcomes typical obfuscated calls to such routines.



\end{document}